\newtheorem{theorem}{Theorem}[section]
\newtheorem{lemma}[theorem]{Lemma}
\let\eps\varepsilon
\def\A{\EuScript{A}}
\def\D{\EuScript{D}}
\def\E{\EuScript{E}}
\def\F{\EuScript{F}}
\def\P{\EuScript{P}}
\def\S{\EuScript{S}}
\def\V{V}
\def\etal{\textsl{et~al.}}
\newcommand{\Term}[1]{\textsc{#1}}
\newcommand{\PTAS}{\Term{PTAS}\xspace}
\def\Opt{\textsc{Opt}}
\long\def\remove#1{}
\def\reals{\mathbb{R}}
\def\HH{\EuScript{H}}
\def\Vor{\EuScript{V}}
\long\def\@makecaption#1#2{
   \vskip 10pt
   \setbox\@tempboxa\hbox{{\footnotesize \textbf{#1.} #2}}
   \ifdim \wd\@tempboxa >\hsize         %
       {\footnotesize \textbf{#1.} #2\par}%
     \else                              %
       \hbox to\hsize{\hfil\box\@tempboxa\hfil}
   \fi}
\begin{document}

\title{On Pseudo-disk Hypergraphs%
  \thanks{%
    Work on this paper by Boris Aronov has been supported
    by NSA MSP Grant H98230-10-1-0210, 
    by NSF Grants CCF-08-30691, CCF-11-17336, CCF-12-18791, and CCF-15-40656,
    and by BSF grant 2014/170.
    Work on this paper by Anirudh Donakonda has been partially 
    supported by NSF Grant CCF-11-17336.
    Work on this paper by Esther Ezra has been supported by NSF under grants  CAREER CCF-15-53354, CCF-11-17336, and CCF-12-16689.
    Work on this paper by Rom Pinchasi has been supported by ISF grant No.~409/16.
  }
}

\author{
  Boris Aronov\thanks{%
     Department of Computer Science and Engineering,
     Tandon School of Engineering, New York University,
     Brooklyn, NY~11201, USA;
     \textsl{boris.aronov@nyu.edu}, \textsl{ad2930@nyu.edu}.
   }
   \and
     Anirudh Donakonda\footnotemark[2]
   \and
   Esther Ezra\thanks{%
     Department of Computer Science, Bar-Ilan University, Ramat Gan, Israel and
     School of Mathematics, Georgia Institute of Technology, Atlanta, Georgia 30332, USA;
     \textsl{eezra3@math.gatech.edu};
     work on this paper by Esther Ezra was
     initiated when she was at the Courant Institute of Mathematical
     Sciences and then at the Department of Computer Science and Engineering,
     Polytechnic (now Tandon) School of Engineering,
     New York University.
   }
   \and
   Rom Pinchasi\thanks{%
     Department of Mathematics,
     Technion -- Israel Institute of Technology,
     Haifa, Israel 32000;
     \textsl{room@math.technion.ac.il}.
   }
}
\maketitle

\begin{abstract}
Let $\F$ be a family of pseudo-disks in the plane, and $\P$
be a finite subset of $\F$. Consider the hypergraph $H(\P,\F)$ whose vertices 
are the pseudo-disks in $\P$ and the edges are all subsets of~$\P$ of the
form $\{D \in \P \mid D \cap S \neq \emptyset\}$, where $S$ is a pseudo-disk in $\F$.
We give an upper bound of~$O(nk^3)$ for the number of edges in~$H(\P,\F)$
of cardinality at most~$k$. This generalizes a result of Buzaglo \etal~(2013). %

As an application of our bound, we obtain an algorithm 
that computes a constant-factor approximation  to the smallest \emph{weighted} dominating set in a collection of pseudo-disks in the plane, in expected polynomial time.
\end{abstract}

\section{Introduction}
\label{sec:introduction}

For a family of pseudo-disks $\F$ and a subset $\P \subset \F$, we denote by
$H(\P,\F)$ the hypergraph whose vertex set is $\P$ and whose
edges are all subsets of $\P$ of the form 
$\{D \in \P \mid D \cap S \neq \emptyset\}$, with~$S$ a~pseudo-disk from~$\F$.  That is, such a subset consists of all pseudo-disks in~$\P$ intersected by a fixed pseudo-disk of~$\F$.

Our main goal in this paper is to obtain an upper bound on the number of edges in~$H(\P,\F)$ of bounded cardinality. %
Specifically, we establish the following main property:

\begin{theorem}
  \label{theorem:main_c}
  Suppose $\F$ is a family of pseudo-disks in the plane and $\P$ is a finite subset of $\F$.
  Let $k \ge 1$ be an integer parameter.
  Then the number of edges of cardinality at most $k$ in $H(\P,\F)$ is $O(|\P|k^3)$, where the implied constant does not depend on the family $\F$.
\end{theorem}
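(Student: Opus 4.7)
The plan is to prove Theorem~\ref{theorem:main_c} by a Clarkson--Shor style random sampling argument, bootstrapped on a linear base case provided by (or extending) the result of Buzaglo et al.~(2013).

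The base case corresponds to edges of small constant cardinality in $H(\P, \F)$. For cardinality at most one, the number of such edges is trivially at most $|\P| + 1$: at most one empty edge and at most $|\P|$ singletons. The work of Buzaglo et al.~(2013), which Theorem~\ref{theorem:main_c} generalizes, gives an $O(|\P|)$ bound in a closely related setting, and we would use this (or a suitable analogue adapted to the present hypergraph) as the base for the induction/amplification.

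For the amplification, I would pick a random sample $R \subseteq \P$ by independently including each pseudo-disk with probability $p = \Theta(1/k)$. For each edge $e$ of $H(\P, \F)$ with $|e| \le k$, the restriction $e \cap R$ is an edge of $H(R, \F)$, witnessed by the same pseudo-disk of $\F$, and $|e \cap R|$ is at most one with constant probability (using that $|e| \le k$ and $p = \Theta(1/k)$). In the Clarkson--Shor framework, one assigns to each edge $e$ a ``defining set'' $D(e) \subseteq \P$ of bounded size, which pins down the edge up to an ambiguity controllable by the intersection structure of $\P$, and relates the count of edges of cardinality at most $k$ in $H(\P, \F)$ to the expected number of small edges in $H(R, \F)$. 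Combining this with the base case applied to the sample (of expected size $\Theta(n/k)$) and the polynomial factor in $k$ coming from the defining-set construction should yield the target bound $O(nk^3)$.

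The main obstacle will be this defining-set analysis. Pseudo-disks lack a canonical geometric parametrization (in contrast to classical disks in the plane, where three tangent disks determine the witness), so the argument must proceed purely combinatorially. A natural route is to invoke the Kedem--Livne--Pach--Sharir linear bound on the union complexity of pseudo-disks to construct, for each edge, a canonical ``minimal'' witness whose equivalence class is determined by a small, bounded-size subset of $\P$ via its tangency structure with $\partial S$. The exponent $k^3$ in the final bound should emerge from combining the $1/p = \Theta(k)$ sampling factor with an additional factor of $\Theta(k^2)$ contributed by the defining-set book-keeping.
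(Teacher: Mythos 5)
Your high-level plan---a random sample at rate $\Theta(1/k)$, a linear bound for edges of constant cardinality applied to the sample, and a bounded-size ``defining set'' per edge---is indeed the shape of the paper's final amplification step, but both ingredients you defer are exactly where the real work lies, and the specific mechanisms you propose for them would not go through. First, the base case is not the Buzaglo et al.\ result and cannot be cited as ``a suitable analogue'': their theorem concerns points contained in pseudo-disks, whereas here one needs a linear bound on the number of edges of cardinality $2$, $3$, and $4$ of the intersection hypergraph $H(\P,\F)$ of pseudo-disks met by pseudo-disks. Establishing this is the main technical content of the paper: one draws a graph on representative points $v(D)$, one per pseudo-disk, with edges built from red arcs (inside the witness $S\in\F$) and blue arcs (inside $D_1$ or $D_2$), shows via Lemma~\ref{lemma:even_crossings} that independent edges cross an even number of times, invokes the strong Hanani--Tutte theorem to get planarity, separately handles pseudo-disks entirely covered by the others via level-two faces of $\A(\P)$, and then bootstraps from pairs to $k$-tuples by an induction with sampling probability $1/2$ (Lemma~\ref{lemma:pairs}) plus a clique-counting lemma. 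Also note that reducing to edges of cardinality at most \emph{one} in the sample, as you suggest, cannot work: there are only $|\P|+1$ candidate defining sets of size at most one, so no injective assignment to edges exists and the count of surviving edges cannot be controlled that way.

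Second, your proposed defining-set construction---a canonical ``minimal'' witness determined by its tangency structure with $\partial S$, in the spirit of shrinking a disk in Clarkson--Shor---fails here because $\F$ is an arbitrary abstract family of pseudo-disks, not closed under shrinking or continuous deformation; there is no canonical witness within $\F$ to speak of, and the KLPS union-complexity bound does not supply one. The paper sidesteps geometry entirely at this stage: it first proves that the VC-dimension of $H(\P,\F)$ is at most $4$ (again via the planarity argument), then applies a Sauer--Shelah-type signature theorem to assign to each edge $e$ a \emph{unique} subset $B_e\subseteq e$ with $|B_e|\le 4$. An edge survives the $q=1/k$ sample if all of $B_e$ is chosen and none of $e\setminus B_e$ is, which happens with probability at least $q^4(1-q)^{k-4}$; uniqueness of signatures and the linear bound on edges of cardinality at most four in the sample give $q^4(1-q)^{k-4}Z=O(q|\P|)$, whence $Z=O(|\P|k^3)$. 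In particular the exponent $3$ is $(\text{VC-dimension})-1$, not a ``$k\cdot k^2$'' bookkeeping artifact; without the VC-dimension bound and the signature theorem your argument has no mechanism that produces it.
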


Our proof technique exploits several ideas from the work of Buzaglo~\etal \cite{BPR-13},
who studied the corresponding problem for points and pseudo-disks in the plane.
Specifically, Buzaglo~\etal \cite{BPR-13} studied hypergraphs defined by \emph{points} and \emph{pseudo-disks enclosing them},
whereas we consider a hypergraph of \emph{pseudo-disks} and subsets of them 
\emph{intersected by pseudo-disks}.  Our result is a generalization of that in \cite{BPR-13},
as one can represent a point by a sufficiently small pseudo-disk.

Note that it is crucial to consider pseudo-disks rather than pseudo-circles
(that is, entire regions rather than just their boundaries).
Indeed, range spaces of pseudo-\emph{circles} and subsets of them met
by a pseudo-circle do not satisfy Theorem~\ref{theorem:main_c}:
Consider $n$ pairwise intersecting circles in general position and for each of the $n \choose 2$ pairs of circles, place a tiny circle at one of their intersection points.
Obviously, this construction yields a collection of quadratically many circle pairs,
contradicting the linear bound asserted in Theorem~\ref{theorem:main_c}, for $k=2$.

As an application of Theorem~\ref{theorem:main_c}, combined with the machinery of Chan~\etal~\cite{CGKS-12},
we show that the \emph{dominating set} of smallest weight in a collection of pseudo-disks in the plane can be approximated
up to a constant factor in expected polynomial time; to the best of our knowledge, the result for the weighted version of this problem was previously unknown. The details are presented in Section~\ref{sec:dominating_set}.

\section{Proof of Theorem~\ref{theorem:main_c} }
\label{sec:main_result}

\subsection{Preliminaries}
\label{sec:prelim}

\paragraph*{Family of pseudo-disks.}
A family of pseudo-disks is a set of objects in the plane, where each object is bounded by a Jordan curve and any two object boundaries either are disjoint, cross properly exactly twice, or are tangent exactly once. No boundary overlaps are allowed. Several boundaries may meet at a common point.
\paragraph*{Arrangements and levels.}
Let $\P \subset \F$ be a finite family of pseudo-disks in the plane.
Let $\A(\P)$ denote the \emph{arrangement} of $\P$ (see, e.g., \cite{AS-survey}).
The \emph{level} of an (open) face in this arrangement is the number of pseudo-disks containing it in their interior.
Well-known results by Kedem \etal \cite{KLPS-86} and by Clarkson and Shor\cite{CS-89} imply that 
$\A(\P)$ has $O(|\P|)$ level-2 faces and, more generally, $O(|\P|k)$ faces at level at most $k$.

\paragraph*{VC-dimension.}
Given a hypergraph $H$ with vertex set $X$,
we say that a subset $K \subseteq X$ is \emph{shattered} by $H$ if, 
for every subset $Z$ of $K$, 
$Z = K \cap e$ for some edge $e \in H$.  
The \emph{VC-dimension} of $H$ is the size of the largest
finite shattered subset.

The rest of this section is organized as follows.
We first show that, for a family $\F$ of pseudo-disks and a finite subset $\P \subset \F$, the 
VC-dimension of $H(\P,\F)$ is at most four and that this bound is optimal (Theorem~\ref{thm:vc_dim}).
Then we prove that the number of edges of
$H(\P,\F)$ of cardinality at most $k$ is linear in $|\P|$; the proof gives a super-polynomial dependency on $k$. 
Finally, using the above bound on the VC-dimension and the proof
technique in \cite{BPR-13}, we are able to improve the dependency on $k$ and show that 
the number of edges in $H(\P,\F)$ of cardinality at most $k$ is $O(|\P|k^3)$. 

\bigskip

\subsection{The Analysis}

We first show:

\begin{theorem}
  \label{thm:vc_dim}
  A hypergraph $H(\P,\F)$ as defined above has VC-dimension at most four. This bound is the best possible.
\end{theorem}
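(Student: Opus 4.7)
The plan is to establish Theorem~\ref{thm:vc_dim} in two parts: first, that no five pseudo-disks in $\P$ can be shattered by $H(\P,\F)$; then, an explicit example of four pseudo-disks that are shattered, showing the bound is tight.

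For the \textbf{upper bound}, I proceed by contradiction. Suppose $\{D_1, \ldots, D_5\} \subseteq \P$ is shattered, and for each $Z \subseteq \{1, \ldots, 5\}$ let $S_Z \in \F$ be a witness pseudo-disk with $S_Z \cap D_i \neq \emptyset$ iff $i \in Z$. The strategy is to reduce to the classical bound (used in \cite{BPR-13}) that the VC-dimension of the range space of \emph{points} in pseudo-disks is at most $3$. To this end, pick a representative point $p_i$ in the interior of each $D_i$, and for each witness $S_Z$ deform it into a new region $S'_Z$ so that $p_i \in S'_Z$ iff $D_i \cap S_Z \neq \emptyset$. A natural attempt is to set $S'_Z := S_Z \cup \bigcup_{i \in Z} D_i$ (suitably smoothed), which automatically ensures $p_i \in S'_Z$ iff $i \in Z$. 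If the collection $\{S'_Z\}$ can be shown to still form a pseudo-disk family, then $\{p_1, \ldots, p_5\}$ is shattered by pseudo-disks, contradicting the classical bound of $3$. The main obstacle lies precisely here: verifying that this deformation preserves the pseudo-disk property, since absorbing various $D_i$'s into different witnesses could in principle create extra boundary crossings between two distinct $S'_Z$ and $S'_{Z'}$. Overcoming this requires a careful topological analysis that exploits the pre-existing pseudo-disk structure on $\{D_1, \ldots, D_5\} \cup \F$, so that every newly introduced portion of $\partial S'_Z$ consists of boundary arcs of the original pseudo-disks interacting in a controlled fashion, together with a local perturbation argument at the junctions.

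For the \textbf{lower bound}, I would provide an explicit construction realizing VC-dimension exactly $4$. Place four small pairwise disjoint disks $D_1, D_2, D_3, D_4$ in convex position (say, at the vertices of a square), and for each subset $Z \subseteq \{1, \ldots, 4\}$ design a pseudo-disk $S_Z$ meeting exactly $\{D_i : i \in Z\}$. The extreme cases $|Z| \in \{0,1,4\}$ are immediate: a tiny disk far from all the $D_i$'s, a small disk overlapping a single $D_i$, and a large disk enclosing the entire configuration. For $|Z| \in \{2,3\}$, route a carefully shaped ``lobe'' region that contains, or dips into, exactly the selected $D_i$'s while skirting the others. One then verifies by direct inspection that any two members of the resulting family of $4 + 16$ regions satisfy the pseudo-disk property, which the construction is designed to ensure by keeping all lobes monotone in a common direction so that distinct lobes cross transversely at most twice.
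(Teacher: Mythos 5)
Your upper-bound argument has a genuine gap, and the reduction it rests on cannot work as stated. First, the ``classical bound of $3$'' you invoke is the VC-dimension of points with respect to genuine \emph{disks} (or halfplanes); for points with respect to \emph{pseudo-disks} the corresponding bound, proved in \cite{BPR-13} by essentially the same planarity argument that this paper adapts, is $4$, not $3$. Indeed, if the bound were $3$ and your reduction were valid, applying the identical reduction to a shattered set of \emph{four} pseudo-disks would show that no four pseudo-disks can be shattered, contradicting the tightness half of the very theorem you are proving (Figure~\ref{fig:four_shattered}). Second, and independently, the deformation $S'_Z := S_Z \cup \bigcup_{i\in Z} D_i$ does not in general yield a pseudo-disk family, and no amount of ``careful topological analysis'' rescues this particular construction: take $1 \in Z \setminus Z'$ and $2 \in Z' \setminus Z$, with $D_1$ and $D_2$ overlapping each other far away from the two crossings of $\partial S_Z$ with $\partial S_{Z'}$, and with $S_Z$ (resp.\ $S_{Z'}$) reaching $D_1$ (resp.\ $D_2$) by a thin tentacle. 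Then $\partial S'_Z$ and $\partial S'_{Z'}$ retain the two crossings coming from $\partial S_Z \cap \partial S_{Z'}$ \emph{and} the two coming from $\partial D_1 \cap \partial D_2$, four in total; the union can also fail to be simply connected. So the step you flag as ``the main obstacle'' is not a technicality to be overcome but the point at which the approach breaks.

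The paper's actual proof does pick a representative point $v(D)$ in each $D$ of the shattered set $K$ (one not covered by the other members of $K$, which exists because shattering forces $K$ to be well-behaved), but it never tries to convert the witnesses into pseudo-disks containing these points. Instead it draws, for each pair $\{D_1,D_2\}$, an edge between $v(D_1)$ and $v(D_2)$ routed through the witness $S$ that meets only $D_1$ and $D_2$, shows via Lemma~\ref{lemma:even_crossings} that edges with no common endpoint cross an even number of times, concludes planarity from the strong Hanani--Tutte theorem (Lemma~\ref{lemma:planar}), and then $\binom{|K|}{2} \le 3|K| - 6$ forces $|K| \le 4$. Note that this mechanism caps the VC-dimension at $4$, not $3$, which is what makes it consistent with tightness. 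Your lower-bound sketch is in the same spirit as the paper's (which simply exhibits Figure~\ref{fig:four_shattered}), but the one-line claim that keeping the lobes ``monotone in a common direction'' makes any two of the $16$ witnesses cross at most twice does not address the genuinely delicate pairs --- the two diagonal witnesses and the pairs of triple witnesses --- so that half also needs an explicit, verified configuration rather than an appeal to inspection.
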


We start by stating the following technical lemma from~\cite{BPR-13}
(see figure below): %
\begin{lemma}[Buzaglo~\etal \cite{BPR-13}]
  \label{lemma:even_crossings}
  Let $\gamma$ and $\gamma'$ be arbitrary non-overlapping curves contained in
  pseudo-disks $D$ and~$D'$, respectively.  If the endpoints of
  $\gamma$ lie outside of $D'$ and the endpoints of $\gamma'$ lie
  outside of $D$, then $\gamma$ and $\gamma'$ cross an even
  number of times (where tangency counts as two crossings).
\end{lemma}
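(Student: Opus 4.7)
The plan is to localize all intersection points of $\gamma$ and $\gamma'$ to the region $D \cap D'$ and then to exploit the simple topology of that region via a mod-$2$ parity argument. As a first step, since $\gamma \subseteq D$ and $\gamma' \subseteq D'$, every point of $\gamma \cap \gamma'$ lies in $D \cap D'$. Because $D$ and $D'$ form a pseudo-disk pair, their boundaries meet in at most two points, so in the non-degenerate case $\overline{D \cap D'}$ is a closed topological disk whose boundary is partitioned by the two crossing points $p_1, p_2$ of $\partial D$ and $\partial D'$ into a single $\partial D$-arc $\beta$ and a single $\partial D'$-arc $\beta'$. The remaining configurations---empty intersection, tangency, or one pseudo-disk contained in the other---either force $\gamma \cap \gamma' = \emptyset$ or contradict the hypotheses on the endpoints of $\gamma$ and $\gamma'$, and so the lemma is immediate in those cases.

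Next, I would decompose $\gamma$ into maximal sub-arcs according to whether each lies in $\overline{D'}$ or in the complement of $D'$. Because the endpoints of $\gamma$ lie outside $D'$, each sub-arc of the first type has both its endpoints on $\partial D'$, and since $\gamma \subseteq D$ these endpoints in fact lie on $\beta'$. Symmetrically, the sub-arcs of $\gamma'$ that lie in $\overline{D}$ have both endpoints on $\beta$. Every crossing of $\gamma$ with $\gamma'$ occurs between one sub-arc $\alpha$ of the first kind and one sub-arc $\alpha'$ of the second, so it suffices to prove that each such pair $(\alpha, \alpha')$ crosses an even number of times, with tangencies counted as two.

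For this last step, after a small transverse perturbation that preserves the mod-$2$ crossing count, I would close $\alpha$ into a closed curve $C$ by appending the sub-arc of $\beta'$ joining its two endpoints, chosen to avoid the corners $p_1$ and $p_2$. Because $\beta$ is connected and disjoint from $C$ (it meets $\beta'$ only at $p_1, p_2$ and does not meet the perturbed $\alpha$), the entire arc $\beta$ lies on a single side of $C$ inside $\overline{D \cap D'}$, so the two endpoints of $\alpha'$ lie on the same side of $C$, and $\alpha'$ crosses $C$ an even number of times. The crossings of $\alpha'$ with the $\beta'$-portion of $C$ are tangential touches from the $D'$-side and thus come in pairs, leaving an even number of crossings between $\alpha'$ and $\alpha$. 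Summing over all sub-arc pairs proves the lemma.

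The main obstacle will be keeping this parity count honest across the various non-generic configurations---tangencies between $\gamma$ and $\gamma'$ themselves (which is exactly the case the ``counts as two'' convention is designed to absorb), tangencies of $\gamma$ or $\gamma'$ with $\partial D$ or $\partial D'$, and the possibility that $\gamma$ or $\gamma'$ is not simple. I expect each of these to be handled by a routine general-position perturbation argument that preserves the mod-$2$ count of crossings while reducing to a fully transverse configuration.
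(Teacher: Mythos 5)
The paper does not actually prove this lemma---it is imported verbatim from Buzaglo, Pinchasi, and Rote \cite{BPR-13} and used as a black box---so there is no in-paper proof to compare against. Your argument is a sound reconstruction of the standard proof: localizing all intersections to the lens $\overline{D\cap D'}$, splitting $\gamma$ and $\gamma'$ into maximal sub-arcs whose endpoints land on the two boundary arcs $\beta'$ and $\beta$ of the lens, closing each sub-arc $\alpha$ along $\beta'$ into a cycle $C$, and observing that the endpoints of $\alpha'$ lie on the connected set $\beta$, which is disjoint from $C$, so the mod-$2$ intersection number of $\alpha'$ with $C$ vanishes. The homological form of ``same side'' (mod-$2$ intersection parity rather than literal Jordan sides) is the right way to read your step when $\alpha$ is not simple, and pushing $\gamma$ into the interior of $D$ and $\gamma'$ into the interior of $D'$ before decomposing, as you suggest, cleanly eliminates both the $\alpha'$-versus-$\delta$ tangencies and the corner points $p_1,p_2$. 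The one small inaccuracy is in your treatment of the degenerate configurations: if $D$ and $D'$ are externally tangent, it is not true that $\gamma\cap\gamma'=\emptyset$ is forced, since both curves may pass through the single common point; but any such contact is necessarily a tangency (each curve stays on its own side) and so contributes an even count, so the conclusion is unaffected. With that caveat and the routine general-position bookkeeping you already flag, the proof is complete.
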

\begin{figure}[h]
  \centering
  \includegraphics{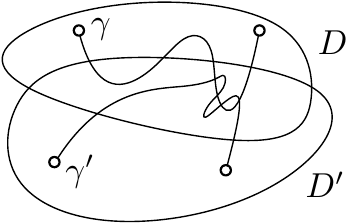}
\end{figure}

We say that a set~$K$ of pseudo-disks is \emph{well-behaved} if every pseudo-disk in~$K$ has a point not covered 
by the union of other pseudo-disks in $K$.

We begin with an auxiliary construction.  Let $K$ be a finite well-behaved set of pseudo-disks.
We construct a graph $G = G(K)$ whose vertices correspond to 
the pseudo-disks in $K$ and whose edges correspond to pseudo-disks in $\F$ 
that meet precisely two sets in $K$.  
More specifically, we draw $G$ as follows:

\paragraph{Vertices of $G$:}
For each pseudo-disk $D \in K$, we fix a point $v(D)\in D$ (which need not lie on the boundary of $D$), not contained in any other pseudo-disk of $K$; it exists since $K$ is well behaved.
The points $\{v(D) \mid D \in K\}$ form the \emph{vertex set} of $G$.

\paragraph{Edges of $G$:}
Let $D_1, D_2 \in K$, $v_1$ = $v(D_1)$ and $v_2$ = $v(D_2)$. Suppose there exists
$S \in \F$ that intersects $D_1$ and $D_2$ and no other disk in $K$; fix one such $S$ (it is possible that $S \in K$). We will add an \emph{edge} $v_1v_2$ to $G$, drawn as described below. We call a connected portion of the edge contained in~$S$ a \emph{red} arc and such a portion outside $S$ a \emph{blue} arc. The edge $v_1v_2$ consists of at most one red arc and at most two blue arcs.
In the figures below, we use the convention of drawing pseudo-disks of $K$ in blue and the ``connecting'' pseudo-disk(s) from $\F$ in red.
\begin{description}
\item[\normalfont\emph{$S$ contains both $v_1$ and $v_2$}:]
Draw a red arc in~$S$ from $v_1$ to $v_2$. 
This forms the edge $v_1v_2 \in G$.
See figure below.
\begin{figure}[h]
  \centering
  \includegraphics{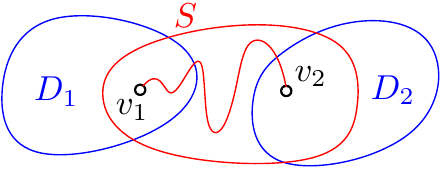}\hspace*{2ex plus  0.5fil}\includegraphics{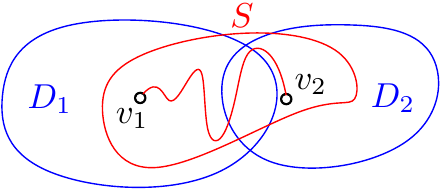}
\end{figure}
\item[\normalfont\emph{$S$ contains $v_1$, but not $v_{2}$}:]
  Draw a red arc in $S$ that starts at $v_1$ and ends at the boundary of $S$ 
  in~$D_2$. Now draw a blue arc in $D_2$ that starts at this point, ends at $v_2$ and lies completely outside $S$ otherwise. The concatenation of these two arcs forms the edge $v_1v_2$ of $G$.
  See figure below.
\begin{figure}[h]
  \centering
  \includegraphics{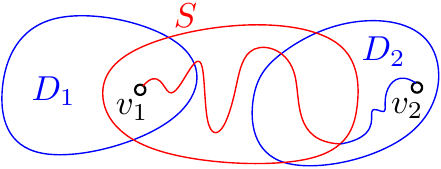}%
  \hspace*{1ex plus  0.5fil}\includegraphics{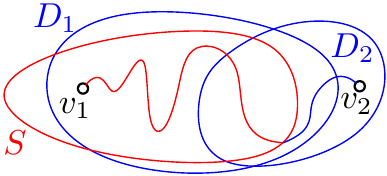}%
  \hspace*{1ex plus  0.5fil}\includegraphics{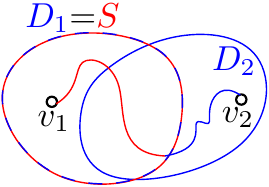}
\end{figure}
\item[\normalfont\emph{$S$ contains neither $v_1$ nor $v_2$}:]
Draw a blue arc in $D_1$ that starts at $v_1$, ends at the boundary of $S$ in~$D_1$, and otherwise stays outside of $S$. From its endpoint, draw a red arc in~$S$ to a point of the boundary of $S$ in~$D_2$ and from there, draw the final blue arc outside $S$ in $D_2$ to the vertex $v_2$. The concatenation of these three arcs constitutes the edge $v_1v_2$.
See figure below.
\begin{figure}[h]
  \centering
  \includegraphics{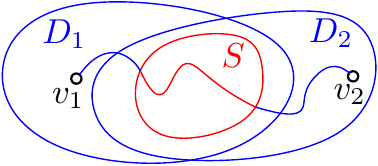}\hspace*{2ex plus  0.5fil}\includegraphics{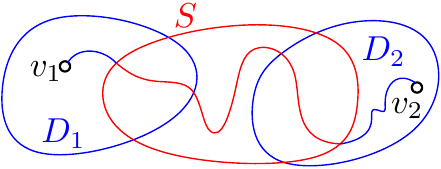}
\end{figure}
\end{description}
The simple but important observation that makes the construction above
possible is that, if $A$ and $B$ are two pseudo-disks, then both 
$A \setminus B$ and $B \setminus A$ are arcwise connected.
By construction, for each arc of the constructed edge, either red or blue, there is a pseudo-disk that completely contains~it. We also assume that the arcs belonging to different edges of $G$ may intersect at a finite number of points, but do not overlap among themselves.  Similarly, we will assume they do not overlap the boundaries of the finite number of pseudo-disks under consideration. 

\begin{lemma} 
  \label{lemma:planar}
  The graph $G = G(K)$ is planar.
\end{lemma}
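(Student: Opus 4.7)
The plan is to invoke the (weak) Hanani--Tutte theorem: a graph is planar if and only if it admits a drawing in the plane in which every pair of \emph{independent} edges (edges with no common endpoint) crosses an even number of times. The drawing of $G = G(K)$ constructed above is a good drawing, in that its arcs meet in a finite number of points and do not overlap, so it suffices to verify the even-crossings condition on independent pairs.

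I would fix two independent edges $e_1 = v(D_1)\,v(D_2)$ and $e_2 = v(D_3)\,v(D_4)$ of $G$, corresponding to ``connector'' pseudo-disks $S_1, S_2 \in \F$ respectively, and note that independence forces $D_1, D_2, D_3, D_4$ to be four pairwise distinct pseudo-disks of $K$. By the definition of the construction, $S_i$ meets only its two associated pseudo-disks from $K$, so $S_1 \cap D_3 = S_1 \cap D_4 = \emptyset$ and $S_2 \cap D_1 = S_2 \cap D_2 = \emptyset$. Each edge is the concatenation of at most three arcs: one red arc contained in $S_i$, and zero, one, or two blue arcs, each contained in one of the two disks $D_j \in K$ associated with that edge and drawn outside~$S_i$.

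The main step is to bound, using Lemma~\ref{lemma:even_crossings}, the crossings of each pair consisting of one arc of $e_1$ and one arc of $e_2$. Every endpoint of an arc of $e_1$ is either some $v(D_\ell)$ with $\ell \in \{1,2\}$, which by the choice of $v(\cdot)$ lies in $D_\ell$ and outside every other member of $K$, or else a point in $\partial S_1 \cap D_\ell \subset S_1 \cap D_\ell$. Combined with the disjointness statements above, this means that every endpoint of every arc of $e_1$ lies outside each of $S_2$, $D_3$, and $D_4$, and symmetrically for the endpoints of the arcs of~$e_2$. Hence for each of the (at most nine) arc pairs---red/red in $(S_1, S_2)$, red/blue in $(S_1, D_j)$ or $(D_{j'}, S_2)$, and blue/blue in $(D_j, D_{j'})$---Lemma~\ref{lemma:even_crossings} applies and yields an even number of crossings. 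Summing, $e_1$ and $e_2$ cross an even number of times, and planarity of $G$ follows from Hanani--Tutte.

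The bulk of the work---and the likely source of subtle slips---lies in the last step: one must check uniformly, across the three kinds of edge configurations (Cases 1--3 of the construction), that the endpoints of every arc, including those on the boundary $\partial S_i$ rather than at an internal vertex, indeed lie outside whichever of the four relevant pseudo-disks contains the partner arc.
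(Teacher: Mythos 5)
Your proof is correct and follows essentially the same route as the paper: the strong Hanani--Tutte theorem combined with Lemma~\ref{lemma:even_crossings} applied arc-by-arc to independent edges, using the same endpoint-location facts (the paper treats red--blue arc pairs by observing they cannot meet at all, whereas you run them through the lemma too, which works equally well). Two minor points you omit: the condition you quote---evenness for \emph{independent} pairs only---is that of the \emph{strong}, not the weak, Hanani--Tutte theorem, and the paper additionally notes that edges of $G$ may self-intersect, so these self-crossings must first be removed by a standard local surgery before the theorem is invoked.
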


\begin{proof}
We will prove $G$ is planar using the strong Hanani-Tutte theorem \cite{Tutte-Com-70}. Consider two edges~$e, e'$ that connect $v_1= v(D_1)$ to $v_2 = v(D_2)$, and $v_3 = v(D_3)$ to $v_4 = v(D_4)$ in $G(K)$, respectively, and do not share a vertex so that $D_1, D_2, D_3, D_4 \in K$ are pairwise distinct.  We will prove that $e$ and $e'$ intersect an even
number of times, by considering their red and blue portions
separately. Let $S \in \F$ be the pseudo-disk intersecting $D_1$ and $D_2$ 
and no other disk in $K$ that was used to draw $e$, and let 
$S' \in \F$ be the corresponding pseudo-disk intersecting only 
$D_3$ and $D_4$ from the disks in $K$. 

\textsc{Red-Blue Intersections:} Consider the red portion of $e$. This red arc is contained in $S$ and therefore does not meet any pseudo-disk of $K$ other than~$D_1, D_2$. As the blue portions of $e'$ lie inside $D_3, D_4$, this implies that the red arc of $e$ does not meet the blue portions of $e'$. Symmetrically the red portion of $e'$ cannot intersect the blue portions of $e$.

\textsc{Red-Red Intersections:} The red arc $\alpha$ along $e$ lies entirely in $S$ and has one endpoint in~$D_1$ and the other in $D_2$. Similarly, the red arc $\alpha'$ along $e'$ lies entirely in $S'$ and has one endpoint in~$D_3$ and the other in $D_4$. As $S$ does not intersect $D_3$~and~$D_4$ and $S'$ does not intersect $D_1$~and~$D_2$, the endpoints of~$\alpha$ do not lie in $S'$ and the endpoints of~$\alpha'$ do not lie in $S$.
By Lemma~\ref{lemma:even_crossings}, $\alpha$ and~$\alpha'$ intersect an even number of times.

\textsc{Blue-Blue Intersections:}  Consider blue arcs $\beta \subset e$  and $\beta' \subset e'$. The blue arc $\beta$ starts, say, at vertex $v_1$ of pseudo-disk $D_1$ and ends at $x$ in $D_1$ on the boundary of pseudo-disk $S$, and $\beta'$ starts, say, at vertex $v_3$ of pseudo-disk $D_3$ and ends at $x'$ in $D_3$ on the boundary of pseudo-disk $S'$.
By the construction of the vertices of $G$, we have 
$v_1 \notin D_3$ and $v_3 \notin D_1$. Now, $x$ cannot lie in $D_3$ because $S$ meets only $D_1$ and $D_2$ and similarly $x'$ cannot lie in $D_1$. Hence, by Lemma~\ref{lemma:even_crossings}  we deduce once again that $\beta$ and $\beta'$ intersect an even number of times.

There is a possibility that some edges of $G$ self-intersect, but such intersections can be removed using standard methods: see, for example, \cite{PSS07} and Figure~\ref{fig:self_intersect}.
\begin{figure}[h]
  \centering
  \includegraphics[width=0.9\textwidth]{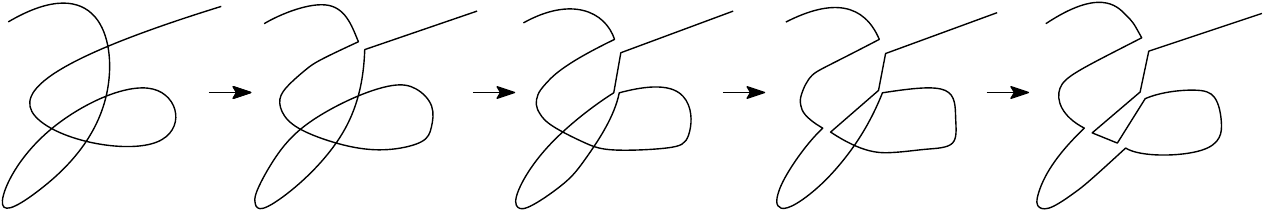}
  \caption{How to undo self-intersections}
  \label{fig:self_intersect}
\end{figure}
Thus, any two edges of $G$ that do not share an endpoint cross an even number of times, and therefore $G$ is planar by the strong Hanani-Tutte theorem \cite{Tutte-Com-70}.  
\end{proof}

\begin{proof}[Proof of Theorem~\ref{thm:vc_dim}]
  Let $K \subseteq \P$ be a set shattered by $\F$.
Since $K$ is shattered, for every pseudo-disk $P \in K$ there
is a pseudo-disk $F \in \F$ that intersects $P$ and no other element of~$K$.
Therefore, $K$ is well-behaved.

For a well-behaved set $K$, $G(K)$ is planar, by Lemma~\ref{lemma:planar},
and therefore has at most $3|K| - 6$~edges (if $|K|<3$, we are already done). However, $K$ is shattered, so $G(K)$ is a complete graph with ${|K| \choose 2}$~edges. Therefore, ${|K| \choose 2} \leq 3|K| - 6$, implying $|K| \le 4$.

This proves that the VC-dimension of $H(\P,\F)$ is at most four.
Figure~\ref{fig:four_shattered} shows that this bound is the best possible,
\begin{figure}[h]
  \centering
  \includegraphics[scale=0.75]{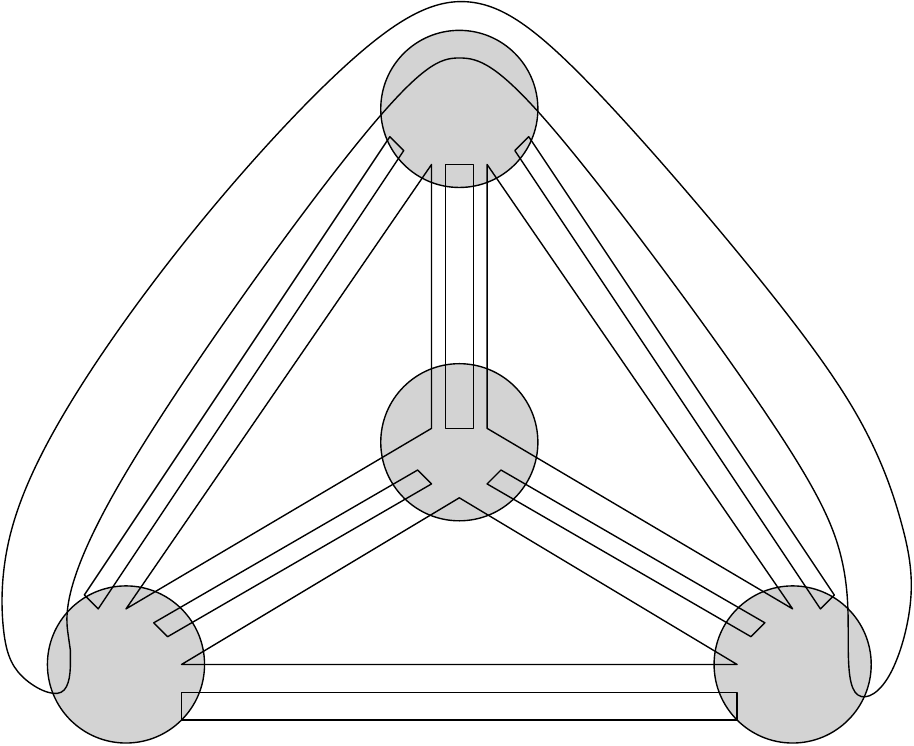}
  \caption{How to shatter a set of four pseudo-disk objects (shaded): the
    pseudo-disks meeting all or none of the four objects are not shown.
    The pseudo-disks meeting exactly one object are the objects
    themselves.}
  \label{fig:four_shattered}
\end{figure}
 completing the proof of Theorem~\ref{thm:vc_dim}.
\end{proof}

Using the analysis above, %
we first show that the number of edges of cardinality two in $H(\P,\F)$ is linear in $|\P|$:

\begin{theorem}
  \label{thm:pairs}
  Let $\F$ be a family of pseudo-disks and let $\P \subset \F$. %
  Then the number of edges of cardinality two in $H(\P,\F)$ is $O(|\P|)$.
\end{theorem}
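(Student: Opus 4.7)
My plan is to apply the graph construction from Lemma~\ref{lemma:planar} directly to $\P$, so that 2-edges of $H(\P,\F)$ become edges of a planar graph on vertex set $\P$, which immediately gives the $O(|\P|)$ bound. The obstacle is that $\P$ itself need not be well-behaved: some of its pseudo-disks may be entirely covered by the union of the others, so Lemma~\ref{lemma:planar} does not apply to $\P$ directly. I will split the 2-edges into two groups according to whether both endpoints are well-behaved in $\P$, and handle each group separately.

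For the first group, let $\P'\subseteq \P$ consist of those $D\in\P$ that contain a point not lying in any other pseudo-disk of $\P$. Then $\P'$ is a well-behaved set on its own, so by Lemma~\ref{lemma:planar} the graph $G(\P')$ is planar, with at most $3|\P'|$ edges. Any 2-edge of $H(\P,\F)$ with both endpoints in $\P'$ has a witness $S\in\F$ meeting only those two pseudo-disks in $\P$, and hence only those two in $\P'$; thus each such 2-edge is an edge of $G(\P')$, and there are at most $O(|\P|)$ of them.

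For the second group --- 2-edges $\{D_1,D_2\}$ with some endpoint, say $D_1$, covered in $\P$ --- the witness $S$ meets $D_1$ at some point $p$. Since $D_1$ is covered, $p$ lies in some other pseudo-disk of $\P$, and since $S$ meets only $D_1,D_2$ in $\P$, that other pseudo-disk must be $D_2$. Moreover, $p$ is outside every pseudo-disk of $\P\setminus\{D_1,D_2\}$, so $p$ lies in a face of $\A(\P)$ at level exactly two, contained in $D_1\cap D_2$. Charging the 2-edge to such a face, and observing that every level-$2$ face of $\A(\P)$ is contained in a unique pair of pseudo-disks, the number of 2-edges in this group is at most the number of level-$2$ faces of $\A(\P)$, which is $O(|\P|)$ by the bound cited in Section~\ref{sec:prelim}.

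The main difficulty is recognizing that well-behavedness of $\P$ is not automatic; once the case split is introduced, each case reduces cleanly to a previously established bound --- planarity of $G(\P')$ for the well-behaved group, and the linear level-$2$ complexity of pseudo-disk arrangements for the covered group. Summing the two yields the claimed $O(|\P|)$ bound on the number of 2-edges of $H(\P,\F)$.
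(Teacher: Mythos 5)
Your decomposition is exactly the paper's: you split the 2-edges into those whose endpoints both contain a private point (handled by planarity of $G(\P')$ via Lemma~\ref{lemma:planar}) and those with a covered endpoint (handled by charging to level-two faces of $\A(\P)$), and the first half of your argument is correct. The gap is in the second half, in the single sentence ``so $p$ lies in a face of $\A(\P)$ at level exactly two.'' The level of a face counts the pseudo-disks containing it \emph{in their interior}, and faces are open; your $p$ is just an arbitrary point of $S\cap D_1$, and nothing prevents it from lying on $\partial D_1$ or $\partial D_2$. For instance, if $S$ and $D_1$ are externally tangent (tangencies are explicitly permitted by the paper's definition of a pseudo-disk family), then $S\cap D_1$ is a single boundary point, and that point sits on an edge or vertex of $\A(\P)$, not in any open level-two face. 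In such configurations your charging step does not apply as written, and it is not even immediate that $D_1\cap D_2$ contains a level-two face at all.

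The paper closes exactly this hole with a local analysis occupying the last two paragraphs of its proof: if $D_1\cap D_2$ contains no level-two face of $\A(\P)$, then $p$ must be a level-zero vertex or lie on a level-one edge, and examining a small neighborhood $U$ of $p$ avoiding all other pseudo-disks of $\P$ shows that $U$ must meet either a level-two face inside $D_1\cap D_2$ (contradicting the assumption) or a level-one face inside $D_1$ (contradicting that $D_1$ is fully covered), so these cases cannot arise. To repair your argument you need either this case analysis, or an explicit argument that $p$ can be chosen in the interiors of $D_1$ and $D_2$; the latter works whenever $S\cap D_1$ is two-dimensional (its interior points lie in $\mathrm{int}(D_1)$, and they cannot all lie on the nowhere-dense curve $\partial D_2$), but fails precisely in the external-tangency case, which therefore must be addressed rather than waved away.
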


\begin{proof}
First, consider the subset $K$ of $\P$ consisting of pseudo-disks $D$ with the property that $D$ contains a point $v(D)$ not covered by any other pseudo-disk of $\P$.  $K$ is well-behaved, by construction, and consequently, by Lemma~\ref{lemma:planar}, the set of edges of cardinality two it induces in~$H(\P,\F)$ forms a planar graph, and therefore its cardinality must be $O(|\P|)$.

It remains to consider edges $e$ in $H(\P,\F)$ of the form $\{D_1,D_2\}$ with $D_1$ covered by other pseudo-disks of $\P$, without loss of generality. By definition of $H(\P,\F)$, there must exist a pseudo-disk $S \in \F$ that meets $D_1$, $D_2$, and no other pseudo-disk of $\P$.  Notice that $D_1$ and $D_2$ must intersect, for otherwise, as $D_1$ is completely covered by other pseudo-disks of $\P$, it is impossible that $S$ intersects $D_1$, $D_2$, and no other pseudo-disks of $\P$ ($S$ would have to intersect one of the pseudo-disks covering~$D_1$, in addition to $D_1$ and $D_2$).  

Since $D_1$ is completely covered by other pseudo-disks of $\P$, there must exist a point $p$ of $D_1\cap D_2$ contained in $S$ and no other pseudo-disks of $\P$.

If $D_{1} \cap D_{2}$ contains an (open) face~$f$ of level~two in the arrangement $\A(\P)$, then we charge the edge~$e$ to~$f$ (then $p$ can be chosen to lie in~$f$). At most one edge is charged to~$f$.
Recalling that the number of faces of level two in $\A(\P)$ is $O(|\P|)$, we conclude that the number of such edges $e$ is $O(|\P|)$.

Now suppose that $e$ is not charged to any face of level two in $\A(\P)$.  Then 
the point $p$ chosen above must lie on the boundary of $D_1\cap D_2$ and not be contained in any other pseudo-disk of $\P$.  In particular,  in $\A(\P)$, it must either (a)~coincide with a vertex of level~zero ($p$ lies on the boundary of both $D_1$ and $D_2$ and is not contained in any other pseudo-disk of $\P$) or (b)~lie in an (open) edge of level~one ($p$ must be either contained in the interior of $D_1$ and on the boundary of $D_2$, or vice versa).

Now consider a neighborhood~$U$ of $p$ sufficiently small to avoid all other pseudo-disks of $\P$.  
In case~(b), it is easy to check that within~$U$ the edge of $\A(\P)$ containing $p$ would have to bound a level-two face contained in $D_1\cap D_2$, a situation that we have already excluded above.  In case~(a), examining all possibilities (the boundaries of~$D_1$ and of~$D_2$ may properly cross or touch at~$p$; $D_1$~and~$D_2$ may touch externally or internally), $U$ must meet a level-two face contained in~$D_1\cap D_2$ (excluded above) or a level-one face contained in~$D_1$ (also excluded, as we assumed $D_1$ is fully covered by other pseudo-disks of $\P$), or both.  Therefore neither case~(a) nor~(b) arises, 
thereby concluding the proof of the theorem.
\end{proof} 

We next show:
\begin{theorem}
  \label{thm:triples_quads}
  Let $\F$ be a family of pseudo-disks, let $\P \subset \F$ be a subset of 
  $\F$, and let $k \geq 2$ be a~fixed integer. 
  Then the number of edges in $H(\P,\F)$ of cardinality at most $k$ is $O_k(|\P|)$.
\end{theorem}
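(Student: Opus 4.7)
I will proceed by induction on $k$, with the base case $k=2$ being Theorem~\ref{thm:pairs} (together with the trivial bound of $|\P|$ on singleton edges). Assume the bound $E_{\le j}(\P')=O_j(|\P'|)$ for every $j<k$ and every finite pseudo-disk set $\P'\subset\F$.

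For the inductive step, the plan is to use a Clarkson--Shor-style random sampling argument anchored to the base case. Sample a subset $R\subseteq\P$ by including each pseudo-disk independently with probability $p=\Theta(1/k)$. The key structural observation that makes the sampling work is that, for any edge $Z$ of $H(\P,\F)$ with witness $S\in\F$, the trace $Z\cap R$ is automatically an edge of $H(R,\F)$ with the same witness $S$: indeed, since $S$ meets exactly $Z$ in $\P$, it meets exactly $Z\cap R$ in $R$. Hence every edge of $H(\P,\F)$ projects to an edge of $H(R,\F)$ of no larger cardinality.

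For a fixed size-$k$ edge $Z$ of $H(\P,\F)$, the event $|Z\cap R|=2$ occurs with probability $\binom{k}{2}p^2(1-p)^{k-2}=\Omega(1)$ for the chosen $p$. When it occurs, $Z\cap R$ is a size-$2$ edge of $H(R,\F)$, and by Theorem~\ref{thm:pairs} the expected number of such edges is $O(\mathbb{E}[|R|])=O(|\P|/k)$. Charging size-$k$ edges of $H(\P,\F)$ to the size-$2$ edges of $H(R,\F)$ they produce, and similarly handling intermediate cardinalities $3,\dots,k-1$ by applying the inductive hypothesis to $H(R,\F)$, should yield the desired bound $E_{\le k}(\P)=O_k(|\P|)$.

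\textbf{Main obstacle.}
The central difficulty will be that the projection $Z\mapsto Z\cap R$ is not injective: several edges $W_1,W_2,\ldots$ of $H(\P,\F)$ can share the same image $Z'=W_i\cap R$, so a naive summation of probabilities grossly overcounts. To control this multiplicity, one must bound, for each size-$2$ edge $Z'=\{A,B\}$ of $H(R,\F)$, the number of size-$\le k$ edges $W$ of $H(\P,\F)$ with $W\cap R=\{A,B\}$. Any such $W$ has the form $\{A,B\}\cup W''$ with $W''\subseteq\P\setminus R$ of cardinality at most $k-2$, so counting them reduces to enumerating size-$\le(k-2)$ edges of $H(\P,\F)$ that extend the fixed pair $\{A,B\}$---a smaller instance of the same counting problem on $\P\setminus R$. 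Unwinding the resulting recursion should give a bound linear in $|\P|$ but with a super-polynomial dependence on~$k$, consistent with the overview. The polynomial refinement $O(|\P|k^3)$ of Theorem~\ref{theorem:main_c} will be obtained separately using the VC-dimension bound from Theorem~\ref{thm:vc_dim}.
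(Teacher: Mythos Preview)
Your plan correctly identifies the central difficulty, but the proposed resolution does not close. Concretely: you want to bound, for each size-$2$ edge $\{A,B\}$ of $H(R,\F)$, the number of size-$\le k$ edges $W$ of $H(\P,\F)$ with $W\cap R=\{A,B\}$. You observe that each such $W$ determines a distinct set $W''=W\setminus\{A,B\}$, which is indeed an edge of $H(\P\setminus R,\F)$ of size $\le k-2$. But this only gives an injection into the \emph{global} family of size-$\le(k-2)$ edges of $H(\P\setminus R,\F)$; it says nothing about which such edges arise from a fixed pair $\{A,B\}$. Your inductive bound on that family is $O_{k-2}(|\P|)$, so the best you get for the multiplicity is $M=O_{k-2}(|\P|)$. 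Combining with $\mathbb{E}[T_2(R)]=O(|\P|/k)$ yields $N_k=O_k(|\P|^2)$, not $O_k(|\P|)$. The recursion does not unwind to anything linear; this is precisely the place where a na\"ive Clarkson--Shor argument fails in the absence of constant-size ``defining sets'' that make the projection injective.

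The paper sidesteps the collision issue entirely by changing the object being sampled. Instead of projecting size-$k$ edges to size-$2$ edges (which is many-to-one), it works throughout with \emph{$k$-good pairs}: pairs $\{D_1,D_2\}\subseteq\P$ for which some $S\in\F$ meets $D_1,D_2$ and at most $k-2$ further disks of $\P$. A pair keeps its identity under sampling---if $D_1,D_2$ both land in the sample and one of the extra disks is dropped, the \emph{same} pair becomes $(k-1)$-good---so the induction on $k$ (with $p=1/2$) runs cleanly and gives $c_k|\P|$ $k$-good pairs. The passage from pairs back to size-$k$ edges is then handled by a separate combinatorial lemma: in any graph whose every induced subgraph on $m$ vertices has $O(m)$ edges, the number of copies of $K_k$ is $O_k(m)$. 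Since each size-$k$ hyperedge spans a unique $K_k$ in the $k$-good-pair graph, this finishes the proof. If you want to salvage a direct sampling argument on hyperedges, you need a device that makes the projection injective---this is exactly the role of the constant-size signatures from Theorem~\ref{thm:vc_dim}, and that device is what the paper uses later to upgrade $O_k(|\P|)$ to $O(|\P|k^3)$, not to prove Theorem~\ref{thm:triples_quads} itself.
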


In order to prove Theorem~\ref{thm:triples_quads}, we first need the following key lemma:
\begin{lemma}
  \label{lemma:pairs}
  Let $k \geq 2$ be a fixed integer. Let $\F$ be a family of pseudo-disks in
  the plane. Let $\HH$ be a subfamily of $m$ pseudo-disks from $\F$. 
  We call a pair of pseudo-disks $\{D_{1},D_{2}\}$ from $\HH$ \emph{$k$-good} 
  if there exists a pseudo-disk in $\F$ that intersects $D_{1}$, $D_{2}$, and at most $k-2$ additional pseudo-disks from $\HH$,
  for a total of at most $k$ pseudo-disks from $\HH$.
  Then the number of $k$-good pairs in~$\HH$ is at most $c_{k}m$, where $c_{k}$ is an absolute constant depending only on~$k$.
\end{lemma}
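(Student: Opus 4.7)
The plan is to derive Lemma~\ref{lemma:pairs} from the base case $k=2$ (Theorem~\ref{thm:pairs}) via a Clarkson--Shor-style random sampling argument. I would begin by fixing, once and for all, for every $k$-good pair $\{D_1,D_2\}$ in $\HH$, a single witness pseudo-disk $S = S(\{D_1,D_2\}) \in \F$ that intersects $D_1$, $D_2$, and $\ell-2$ other pseudo-disks of $\HH$, where $2 \le \ell \le k$ is the total number of pseudo-disks of $\HH$ met by $S$. This fixed choice of witness is the only combinatorial bookkeeping the argument needs.

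Next I would draw a random sample $R \subseteq \HH$ by including each pseudo-disk independently with probability $p := 1/k$, and consider the number $X$ of edges of cardinality two in $H(R,\F)$. For a fixed $k$-good pair $\{D_1,D_2\}$ with witness $S$ meeting $\ell$ pseudo-disks of $\HH$, the event that $D_1,D_2 \in R$ while none of the other $\ell-2$ pseudo-disks of $\HH$ intersected by $S$ is sampled has probability exactly $p^2(1-p)^{\ell-2} \ge p^2(1-p)^{k-2}$. Conditioned on this event, $S \in \F$ meets precisely $\{D_1,D_2\}$ among the pseudo-disks of $R$, so $\{D_1,D_2\}$ is a cardinality-two edge of $H(R,\F)$ and contributes to $X$. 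Writing $N_k$ for the number of $k$-good pairs in $\HH$, linearity of expectation gives
\[
  \EE[X] \;\ge\; N_k \cdot p^2 (1-p)^{k-2}.
\]

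For the matching upper bound, I would apply Theorem~\ref{thm:pairs} pointwise to each realization of $R$: the number of cardinality-two edges of $H(R,\F)$ is at most $C\,|R|$ for an absolute constant $C$ that does not depend on $\F$, so $\EE[X] \le Cpm$. Combining the two inequalities and substituting $p = 1/k$ yields
\[
  N_k \;\le\; \frac{Cm}{p(1-p)^{k-2}} \;=\; O(k\,m),
\]
since $(1-1/k)^{k-2}$ is bounded below by an absolute positive constant for every $k \ge 2$. This gives the desired bound with $c_k = O(k)$.

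The main subtlety I anticipate is making sure Theorem~\ref{thm:pairs} can legitimately be invoked on the random sample with a constant independent of the ambient family $\F$ (which is exactly what its statement guarantees), and confirming that the witness $S$ chosen for $\{D_1,D_2\}$ inside $\HH$ remains a valid 2-witness inside the subfamily $R$; the latter is automatic, since passing from $\HH$ to $R \subseteq \HH$ can only \emph{remove} pseudo-disks that $S$ meets, never introduce new ones. Beyond these bookkeeping points, the argument is the standard shallow-range Clarkson--Shor paradigm adapted to pairs, and I do not foresee further obstacles.
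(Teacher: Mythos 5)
Your proof is correct, and it takes a genuinely different route from the paper's. The paper proves the lemma by induction on $k$: it samples $\HH$ with probability $1/2$, shows that each $k$-good pair becomes $(k-1)$-good in the sample with probability at least $1/8$, and invokes the inductive hypothesis, which yields the recurrence $c_k \le 4c_{k-1}$ and hence $c_k = \Theta(4^k)$. You instead perform a single Clarkson--Shor-style sampling step with $p = 1/k$ that reduces directly to the base case (Theorem~\ref{thm:pairs}): a fixed witness $S$ for a $k$-good pair meets exactly that pair within the sample $R$ with probability at least $p^2(1-p)^{k-2}$, distinct surviving pairs give distinct cardinality-two edges of $H(R,\F)$, and the absolute constant in Theorem~\ref{thm:pairs} applies to $H(R,\F)$ since $R \subseteq \F$. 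This gives $c_k = O(k)$, which is quantitatively much stronger than the paper's bound; both versions suffice for the lemma as stated, since it only asks for a constant depending on $k$. (Your improvement does not by itself obviate the paper's later VC-dimension/signature argument, since the super-exponential growth in $k$ re-enters through Lemma~\ref{lem:l_cycles}, but as a proof of this lemma it is both valid and sharper.) The only bookkeeping points worth making explicit --- that the $\ell \le k$ pseudo-disks of $\HH$ met by $S$ are distinct, so the survival probability is exactly $p^2(1-p)^{\ell-2}$ by independence, and that the witness $S$ may itself belong to $\HH$ (in which case the event simply also requires $S \notin R$) --- are handled correctly by your setup.
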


\begin{proof}
We prove the lemma by induction on $k$. Case $k=2$ is precisely Theorem~\ref{thm:pairs}. 
Suppose $k \geq 3$. We choose each pseudo-disk in $\HH$ independently with probability $p = 1/2$ (but we keep $\F$ intact).
We denote the resulting sample of pseudo-disks by $\HH'$.
We say that a $k$-good pair $\{D_1, D_2\}$ from $\HH$ \emph{survives} if $D_1, D_2 \in \HH'$ 
and there is a pseudo-disk in $\F$ that intersects $D_{1}, D_{2}$, and a total of at most $k-1$ pseudo-disks in $\HH'$.
In other words, after sampling $\{D_{1},D_{2}\}$ becomes \emph{$(k-1)$-good}.
We observe that a $k$-good pair $\{D_{1},D_{2}\}$ in $\HH$ survives with probability of at least $1/8$.
Indeed, because $\{D_{1}, D_{2}\}$ is a $k$-good pair, there exists $F \in \F$ such that $F$ intersects $D_{1}$ and $D_{2}$
and a total of $\ell \leq k$ pseudo-disks in $\HH$. If $\ell \leq k-1$, then $\{D_{1}, D_{2}\}$
is $(k-1)$-good as soon as both $D_{1}$ and $D_{2}$ are in~$\HH'$; this happens with probability $1/4$.
If $\ell=k$, let $S \in \HH$ be a pseudo-disk other than $D_{1}$ and $D_{2}$ intersected by $F$. If $D_{1}$ and~$D_{2}$ are in 
$\HH'$ and $S$ is not in $\HH'$, then $\{D_{1},D_{2}\}$ becomes $(k-1)$-good.
This happens with probability $1/8$; there may be other ways for $\{D_1,D_2\}$ to become $(k-1)$-good.
Therefore the expected number of $(k-1)$-good pairs in $\HH'$ is at least 
$\frac18$ of the number of $k$-good pairs in $\HH$.

By the inductive hypothesis on $\HH'$, %
there are at most $c_{k-1}|\HH'|$ $(k-1)$-good pairs of pseudo-disks in $\HH'$.
Therefore, the expected number of $(k-1)$-good pairs of pseudo-disks in $\HH'$
is at most $c_{k-1} m/2$.

Combining the two estimates, the number of $k$-good pairs in $\HH$ is at most $4 c_{k-1} m$, as claimed.
\end{proof}

Theorem~\ref{thm:triples_quads} is then proved using the following result from \cite{BPR-13}:

\begin{lemma}[Buzaglo~\etal \cite{BPR-13}]
  \label{lem:l_cycles}
  Consider a graph $G$ on $m$ vertices, with the property that, in any subgraph induced by a subset $V$ of vertices, 
  the number of edges is at most $c |V|$, where $c > 0$ is an absolute constant. Then, for any $k \ge 2$, the number of copies of $K_k$
  (the complete graph on~$k$~vertices) in $G$ is at most $d_k m$, 
  where $d_k = \frac{(2c)^{k-1}}{k!}$.
\end{lemma}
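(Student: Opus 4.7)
The plan is to prove Lemma~\ref{lem:l_cycles} by induction on $k$, using a vertex--neighborhood double-counting argument that crucially exploits the \emph{hereditary} nature of the hypothesis: since every induced subgraph of $G$ has at most $c|V|$ edges, the same bound holds for the subgraph induced on any neighborhood.

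The base case $k=2$ is immediate. The number of copies of $K_2$ in $G$ is just $|E(G)|$, which by hypothesis (taking $V=V(G)$) is at most $cm$. Since $d_2 = (2c)^{1}/2! = c$, this matches the claim.

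For the inductive step, let $f_j(G)$ denote the number of $K_j$-copies in $G$, and assume $f_{k-1}(G') \le d_{k-1} |V(G')|$ for every graph $G'$ satisfying the hypothesis. The key identity is
\begin{equation*}
  k \cdot f_k(G) \;=\; \sum_{v \in V(G)} f_{k-1}\bigl(G[N(v)]\bigr),
\end{equation*}
since each $K_k$ containing $v$ corresponds bijectively to a $K_{k-1}$ in the induced neighborhood $G[N(v)]$, and each $K_k$ is counted exactly $k$ times on the right (once per vertex). The induced subgraph $G[N(v)]$ inherits the hypothesis, so by induction $f_{k-1}(G[N(v)]) \le d_{k-1} \deg(v)$. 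Summing and using the handshake lemma together with $|E(G)| \le cm$ gives
\begin{equation*}
  k \cdot f_k(G) \;\le\; d_{k-1} \sum_{v} \deg(v) \;=\; 2 d_{k-1} |E(G)| \;\le\; 2 c\, d_{k-1}\, m.
\end{equation*}

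Dividing by $k$ and checking the recursion on constants, $\frac{2c}{k} d_{k-1} = \frac{2c}{k} \cdot \frac{(2c)^{k-2}}{(k-1)!} = \frac{(2c)^{k-1}}{k!} = d_k$, completes the induction. There is no real obstacle: the constant fits tightly precisely because of the $k!$ in the denominator of $d_k$, which is exactly what the factor of $1/k$ coming from ``each clique counted $k$ times'' produces. The only thing to be slightly careful about is that the hypothesis must be applied to the induced subgraph on $N(v)$, which is legitimate because the edge-bound assumption is hereditary by definition.
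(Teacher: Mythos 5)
Your proof is correct. The paper does not reproduce a proof of this lemma (it is quoted from Buzaglo~\etal), but your induction via the identity $k\,f_k(G)=\sum_v f_{k-1}(G[N(v)])$, combined with the hereditary edge bound and the handshake lemma, is the standard argument and recovers the stated constant $d_k=(2c)^{k-1}/k!$ exactly, so there is nothing to add.
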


\begin{proof}[Proof of Theorem~\ref{thm:triples_quads}]
We follow the approach in~\cite{BPR-13}.
We define a graph $G$ whose vertex set is~$\P$. Two pseudo-disks in $\P$
form an edge in $G$ if they belong to an edge of $H(\P,\F)$ 
of cardinality $k$.

By Lemma~\ref{lemma:pairs}, if $G'$ is any induced subgraph of $G$, then the number of edges in $G'$ is $O(|V(G')|)$,
where $V(G')$ is the set of vertices of $G'$.

We can now use Lemma~\ref{lem:l_cycles} and conclude that the number of 
copies of $K_{k}$, the complete graph on $k$ vertices,
in $G$ is $O(|V(G)|)$. This is sufficient to prove the assertion in Theorem \ref{thm:triples_quads},
since every edge of cardinality $k$ in $H(\P,\F)$ 
gives rise to a unique copy of $K_{k}$ in $G$, as is easily verified.
\end{proof}

Next we would like to prove Theorem \ref{theorem:main_c},
namely, to show that the number of edges
of cardinality at most $k$ in $H(\P,\F)$ is $O(|\P|k^3)$.
The bound in Theorem~\ref{thm:triples_quads} is linear in $|\P|$
but at the cost of a~multiplicative constant that grows extremely fast (super-exponentially) in~$k$.
In order to overcome this problem and improve the dependence on~$k$, we use Theorem~\ref{thm:triples_quads} and
a fundamental property shown in~\cite{BPR-13}, namely, that in a set system of bounded VC-dimension
every set has a unique small signature. Specifically:

\begin{theorem}[Buzaglo~\etal \cite{BPR-13}]
  \label{thm:signature}
  Let $\S = \{S_1, \ldots, S_m\}$ be a set family
  with VC-dimension $d$. Then it is possible to assign to each set
  $S \in \S$ a subset $S^{*} \subseteq S$ (its \emph{signature}), of cardinality at~most~$d$, 
  so that distinct sets from~$\S$ are assigned distinct signatures.
\end{theorem}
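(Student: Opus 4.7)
The plan is to prove a stronger statement from which Theorem~\ref{thm:signature} follows at once: for every set system $\S$ on a finite ground set $X$, there is an injection $\tau : \S \to 2^X$ such that $\tau(S) \subseteq S$ and, moreover, $\tau(S)$ is shattered by $\S$, for every $S \in \S$. Since any set shattered by $\S$ has size at most $d$ by hypothesis, one may then simply set $S^{*} := \tau(S)$. This strengthening is essentially Pajor's form of the Sauer--Shelah lemma, and it yields the signature property as a byproduct of a structural fact rather than as a pigeonhole accident.

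I would prove the stronger claim by induction on $|X|$, with the empty ground set as a trivial base case. For the inductive step, fix any $x \in X$, put $Y := X \setminus \{x\}$, and introduce two auxiliary systems on $Y$:
\[
  \A := \{\, S \cap Y : S \in \S \,\}, \qquad \B := \{\, T \subseteq Y : T \in \S \text{ and } T \cup \{x\} \in \S \,\}.
\]
A direct count gives $|\S| = |\A| + |\B|$, and a routine check shows that $\A$ has VC-dimension at most $d$ while $\B$ has VC-dimension at most $d-1$, since any $K \subseteq Y$ shattered by $\B$ forces $K \cup \{x\}$ to be shattered by $\S$. The inductive hypothesis then supplies injections $\tau_A$ on $\A$ and $\tau_B$ on $\B$ with the desired subset-and-shattering properties.

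I would then define $\tau : \S \to 2^X$ by three cases: if $x \notin S$, put $\tau(S) := \tau_A(S)$; if $x \in S$ and $S \setminus \{x\} \in \S$, put $\tau(S) := \tau_B(S \setminus \{x\}) \cup \{x\}$; and if $x \in S$ but $S \setminus \{x\} \notin \S$, put $\tau(S) := \tau_A(S \setminus \{x\})$. The containment $\tau(S) \subseteq S$ is immediate from the definitions. The shattering property lifts cleanly from $\A$ and $\B$ up to $\S$: any set shattered by $\A$ is automatically shattered by $\S$, while any $K$ shattered by $\B$ yields $K \cup \{x\}$ shattered by $\S$ by the defining property of $\B$ (pairing each witness $T \in \B$ with its companion $T \cup \{x\} \in \S$ to realize both possible intersections with $\{x\}$).

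The main obstacle I anticipate is the bookkeeping required to verify injectivity of $\tau$. Signatures containing $x$ arise only in the second case and are pairwise distinct by injectivity of $\tau_B$; signatures not containing $x$ come from the first and third cases, where the delicate check is that if $S \in \S$ with $x \notin S$ collides with some $S' \in \S$ having $x \in S'$ and $S' \setminus \{x\} \notin \S$, then the common $\A$-image would force $S = S' \setminus \{x\}$ as subsets of $Y$; but then $S \in \S$ and $S \cup \{x\} = S' \in \S$ would place $S$ in $\B$, contradicting the case-3 hypothesis that $S' \setminus \{x\} \notin \S$. Once this case analysis is done, injectivity of $\tau_A$ and $\tau_B$ closes the argument and the induction carries through.
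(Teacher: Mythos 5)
The paper does not prove Theorem~\ref{thm:signature} at all: it is imported verbatim from Buzaglo, Pinchasi, and Rote \cite{BPR-13} and used as a black box, so there is no in-paper argument to compare against. Your proof is correct and self-contained. You in fact establish the stronger Pajor-type form of the Sauer--Shelah lemma (each set carries an injectively assigned subset that is \emph{shattered} by $\S$), from which the signature statement follows since shattered sets have size at most $d$; the decomposition $|\S|=|\A|+|\B|$, the lifting of shattering from $\A$ and $\B$ to $\S$, and the three-case definition of $\tau$ with its injectivity check (in particular the case-1 versus case-3 collision, which is ruled out because a collision would force $S'\setminus\{x\}=S\in\S$, contradicting the case-3 hypothesis) are all sound. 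Note that the explicit VC-dimension bounds on $\A$ and $\B$ are not actually needed for your induction, since your inductive claim makes no dimension assumption; they are harmless but redundant. One small gap worth closing: your induction is on $|X|$, so you implicitly assume a finite ground set, which the theorem statement does not guarantee. Since $\S$ is finite this is easily repaired --- pick one representative point from each nonempty atom of the Venn diagram of $\S$, observe that distinct sets have distinct traces on this finite set and that the trace family still has VC-dimension at most $d$, apply your argument to the traces, and pull the signatures back --- but the reduction should be stated. With that remark added, your argument is a complete proof of a result the paper only cites.
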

Given this machinery we are ready to prove Theorem \ref{theorem:main_c}.
We follow almost verbatim the random sampling approach in~\cite{BPR-13}. 
By Theorem~\ref{thm:vc_dim}, the VC-dimension of $H(\P,\F)$ 
is at most four. Applying Theorem~\ref{thm:signature},
we assign to each $e \in H(\P,\F)$ a unique subset $B_{e} \subseteq e$ of 
cardinality at most four.
  
  Let $0 < q < 1/2$ be a parameter to be fixed shortly. 
  We now select each pseudo-disk in~$\P$ independently with probability $q$.
  Let $\P'$ be the resulting sample, and consider the induced hypergraph $H(\P',\F)$.
  We say that $e \in H(\P,\F)$ \emph{survives} if all the pseudo-disks in $B_{e}$ 
  are in $\P'$ but none of the remaining pseudo-disks in $e \setminus B_{e}$ are in $\P'$.
  
  It is easy to verify that, if $e$ has cardinality
  at most $k$, then $e$ survives with probability
  $$
  q^{|B_{e}|} (1-q)^{|e| - |B_{e}|} \ge  q^{|B_{e}|} (1-q)^{k - |B_{e}|} \ge q^{4} (1-q)^{k-4} ,
  $$
  where the first inequality follows from the assumption $|e| \le k$, and the second from the fact that $q < 1/2$.
  
  By Theorems~\ref{thm:pairs} and~\ref{thm:triples_quads}, 
  the number of edges in $H(\P',\F)$ of cardinality two, three, and four is $O(|\P'|)$, with an absolute constant of proportionality.
  Clearly, the number of edges in $H(\P',\F)$ 
  of cardinality one is at most $|\P'|$. 
  It thus follows that the number of surviving edges from $H(\P,\F)$ is~$O(|\P'|)$, by Theorem~\ref{thm:signature}.
  Taking expectations, we see that the expected number of surviving edges 
from $H(\P,\F)$ is $O(|\P'|)=O(q|\P|)$.
  On the other hand, the expected number of surviving edges of $H(\P,\F)$ 
with cardinality at most $k$ is at least
  $q^{4} (1-q)^{k-4}Z$, where $Z$ is the number of edges in $H(\P,\F)$
of cardinality at most $k$.
  Therefore, $q^{4} (1-q)^{k-4}Z = O(q|\P|)$. 
  By setting $q = 1/k$, we obtain $Z = O(|\P| k^{3})$, as asserted.

  This at last completes the proof of Theorem~\ref{theorem:main_c}.

\section{An application to the weighted dominating set problem}
\label{sec:dominating_set}

\paragraph*{Problem statement.}
We are given a finite collection $\P$ of \emph{pseudo-disks} in the plane. 
We define the \emph{intersection graph} $G$ of $\P$ in the standard manner, that is,
the vertex set is $\P$ and there is an edge between two pseudo-disks if their intersection is non-empty.

The \textsc{dominating set} problem for $G$ is to find a smallest subset $\D \subseteq \P$, such that 
each vertex in $G$ is either in $\D$ or is adjacent to a vertex in $\D$. In other words, this is a smallest
subset of $\P$ such that any pseudo-disk in $\P$ appears in the subset or is intersected by a pseudo-disk
in it.
In the \textsc{weighted dominating set} problem, each element of $\P$ is assigned a non-negative weight,
and the goal is to find a dominating set of smallest total weight.

\paragraph*{Related work.}

It is beyond the scope of this paper to report all previous studies related to the dominating set problem.
We only mention that the abstract problem for general graphs is 
NP-hard to solve \cite{GJ-79, Karp-72}, and that the standard greedy algorithm yields an
$(1+\ln{n})$-approximation factor \cite{Chvatal-79, LOVASZ-75}, 
where $n$~is the size of the vertex set. 
The problem remains NP-hard
in more specialized settings, such as unit disk graphs and growth-bounded graphs \cite{CCJ-90}.
However, the approximation factors achievable in polynomial time tend to be better.
Specifically, the \textsc{dominating set} problem admits a polynomial-time approximation scheme~(\PTAS)
for the aforementioned settings \cite{HMRRRS-98, NH-06}; see also \cite{EM-09} for a constant-factor
approximation for the \textsc{weighted dominating set} problem on unit disk graphs.
The current state-of-the-art for pseudo-disk graphs is a \PTAS for the unweighted case, which recently has been
introduced by Govindarajan~\etal~\cite{GRRR-16}.
See also the earlier studies by Erlebach and van Leeuwen \cite{EL-08} for special forms of triangles and for axis-parallel
rectangles, and by Gibson and Pirwani~\cite{GP-10}, who obtained a \PTAS for the case of disk graphs, and a constant-factor
approximation for the weighted problem.
The latter result was strengthened by Chan~\etal~\cite{CGKS-12}, who also presented a simple
reduction from \textsc{set cover} to \textsc{dominating set}, considerably simplifying the approach taken in \cite{GP-10}.
For a more detailed discussion, we refer the reader to~\cite{GP-10, GRRR-16} and the references therein.

In this section we deduce the following main result, using the assertions in Theorem~\ref{theorem:main_c}, combined with 
the recent machinery of Chan~\etal \cite{CGKS-12}: 

\begin{theorem}
  \label{thm:dominating_set}
  There is a randomized expected polynomial-time algorithm, that, given a set~$\P$ of~pseudo-disks in the plane,
  each with a non-negative weight, 
  computes a dominating set~$\D \subseteq \P$ of~weight~$O(\Opt)$, 
  where $\Opt$ is the smallest total weight of such a dominating set.
\end{theorem}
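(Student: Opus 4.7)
The plan is to reduce the weighted dominating set problem on a pseudo-disk intersection graph to a weighted set cover instance whose associated set system is exactly of the form $H(\P,\P)$, and then invoke the framework of Chan~\etal~\cite{CGKS-12} on that instance.

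First, I would formalize the reduction to set cover. For each $D\in\P$, let $N[D]=\{D'\in\P\mid D\cap D'\ne\emptyset\}$ be its closed neighborhood in the intersection graph. A subset $\D\subseteq\P$ is a dominating set if and only if $\bigcup_{D\in\D}N[D]=\P$. Assigning to the set $N[D]$ the weight of~$D$, the weighted dominating set problem on $\P$ becomes weighted set cover on the set system with ground set $\P$ and sets $\{N[D]\mid D\in\P\}$. The key observation is that this set system coincides with the hypergraph $H(\P,\P)$ obtained by taking $\F=\P$ in our definition, because each hyperedge of $H(\P,\P)$ is the set of pseudo-disks of $\P$ met by some fixed pseudo-disk of $\F=\P$.

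Next, I would verify that this set system has the structural properties required by the Chan~\etal\ framework, namely bounded VC-dimension and polynomial shallow-cell complexity. Bounded VC-dimension is immediate from Theorem~\ref{thm:vc_dim}, which gives a VC-dimension of at most four. The shallow-cell complexity bound is supplied by Theorem~\ref{theorem:main_c}: the number of edges of $H(\P,\P)$ of cardinality at most $k$ is $O(|\P|k^{3})$, so the shallow-cell complexity is polynomial (of degree three) in~$k$.

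Finally, I would apply the machinery of \cite{CGKS-12}: for a weighted set cover instance whose set system has bounded VC-dimension and shallow-cell complexity of the form $O(n\varphi(k))$ with $\varphi(k)$ polynomial in~$k$, there is a randomized algorithm that in expected polynomial time produces a set cover of total weight $O(\Opt)$, where the constant in the $O(\cdot)$ depends only on $\varphi$. (This constant-factor guarantee is driven by LP-rounding coupled with the quasi-uniform sampling / $\eps$-net construction enabled by polynomial shallow-cell complexity; a polynomial $\varphi$ yields $\eps$-nets of size $O(1/\eps)$, which is precisely what converts the LP bound into an $O(1)$-approximation.) Because the set cover instance we constructed corresponds bijectively, with matching weights, to the dominating set instance we started from, the resulting cover is a dominating set of weight $O(\Opt)$, establishing Theorem~\ref{thm:dominating_set}.

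The main obstacle I would expect is bookkeeping rather than a conceptual difficulty: ensuring that the notion of shallow-cell complexity used in \cite{CGKS-12} matches the bound supplied by Theorem~\ref{theorem:main_c} (i.e.~counting bounded-cardinality hyperedges of $H(\P,\P)$), and confirming that the polynomial dependence $k^{3}$ — as opposed to a merely sub-polynomial one — still yields a true $O(1)$-approximation rather than an $O(\log^{*}\!n)$ or $O(\log\log n)$ factor. The reduction itself and the VC-dimension bound are immediate given the results we have already proved.
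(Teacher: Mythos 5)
Your proposal is correct and follows essentially the same route as the paper: reduce weighted dominating set to a covering/hitting problem on the neighborhood hypergraph $H(\P,\P)$, supply the shallow-cell-complexity bound $O(|\P'|k^3)$ hereditarily from Theorem~\ref{theorem:main_c}, and invoke the quasi-uniform sampling framework of Chan~\etal. The only cosmetic difference is that the paper phrases the reduction as weighted \emph{hitting set} for $H(\P)$ rather than weighted set cover over the sets $N[D]$; since the incidence relation $D'\in N[D]\iff D\cap D'\neq\emptyset$ is symmetric, the two formulations coincide here and your bookkeeping concern resolves exactly as you anticipate.
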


We first show the connection between the dominating set problem and the \emph{hitting-set} problem, and then describe the machinery
of Chan~\etal \cite{CGKS-12} and how to apply it in the scenario of our problem.

\paragraph*{Hitting sets and dominating sets.}

Fix any family $\P$ of pseudo-disks in the plane.
Consider the intersection graph $G$ of $\P$ as defined above.
In $G$, a \emph{neighborhood} of a pseudo-disk is the set of pseudo-disks intersecting
it; therefore, this is a subgraph of $G$ spanned by (the vertex set of) a star. Note that we include the pseudo-disk itself in its neighborhood.
The family of all neighborhoods defines a hypergraph $H(\P)$, which is a special case of the hypergraph $H(\P, \F)$ defined above,
as in this case we have $\F = \P$.

We now observe that a dominating set in $G$ is, in fact, a \emph{hitting set} for $H(\P)$, where the latter
refers to a subset $\D \subseteq \P$, which meets all edges of $H(\P)$. 
That is, $\D$ meets all objects in~$\P$ if and only if each neighborhood in the
intersection graph (that is, an edge of $H(\P)$) is hit by an element of~$\D$. 
In particular, the \textsc{minimum hitting set} for $(\P,\E)$ corresponds to the \textsc{minimum dominating set} of $G$,
and this property holds in the weighted setting as well.
See Figure~\ref{fig:example} for an example. 

\begin{figure} %
  \centering  
    \subfigure[]{\includegraphics{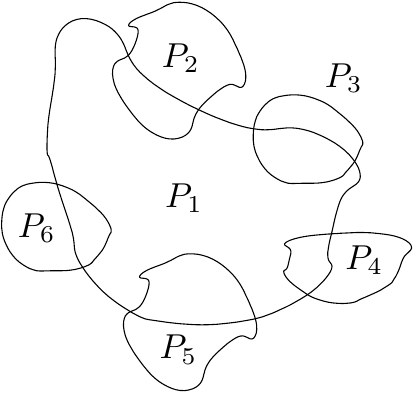}\label{fig:disks}}%
    \qquad
    \subfigure[]{\includegraphics{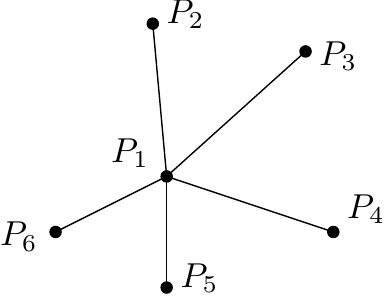}\label{fig:graph}}
    \caption{The intersection graph in \subref{fig:graph} induced by the pseudo-disks $P_1, \ldots, P_6$, depicted in \subref{fig:disks}, 
    is a ``star'' centered at $P_1$. The smallest dominating set is $\{ P_1 \}$. 
    The underlying hypergraph is $H(\P)$, where $\P = \{P_1, \ldots, P_6\}$, and the edges are
    $\{ \{P_1, \ldots, P_6\}, \{P_1, P_2\} , \{P_1, P_3\}, \{P_1, P_4\}, \{P_1,  P_5\}, \{P_1, P_6\} \}$. 
    Finally, $\{ P_1 \}$ is the smallest hitting set for this hypergraph.}
  \label{fig:example}
\end{figure}

Chan~\etal \cite{CGKS-12} showed the existence of small approximation factors (achievable in expected polynomial time)
for the weighted hitting-set problem in favorable scenarios. Specifically, they showed:

\begin{theorem}[Chan~\etal \cite{CGKS-12}]
  \label{thm:chan_weighted_cover}
  Let $H(V,E)$ be a hypergraph representing a hitting set instance, where the number of edges of cardinality $k$
  for any restriction of $H$ to a subset $V' \subseteq V$ is at most $O(|V'| k^{c})$, where $c > 0$ is an absolute
  constant and $k \le |V'|$ is an integer parameter.\footnote{In~\cite{CGKS-12} this property is referred to as ``shallow cell complexity,'' although we do not define it formally in this paper.}
  Then there exists a randomized polynomial-time $O(1)$-approximation algorithm for the weighted hitting set problem for $H(V,E)$.
\end{theorem}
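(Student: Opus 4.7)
The plan is to prove this by combining the LP relaxation of weighted hitting set with an $\varepsilon$-net lemma of linear size $O(1/\varepsilon)$, in the style of Br\"onnimann-Goodrich and its weighted extensions by Varadarajan and by Chan~\etal~\cite{CGKS-12}. The two ingredients are an $\varepsilon$-net lemma that exploits the shallow cell complexity hypothesis, and a rounding step that converts the LP optimum together with such a net into an integer hitting set of $O(\Opt)$ weight.

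First I would prove the $\varepsilon$-net lemma: under the assumption that the restriction of $H$ to any $V' \subseteq V$ has at most $O(|V'|k^c)$ edges of cardinality at most $k$, every probability distribution $\mu$ on $V$ admits a weighted $\varepsilon$-net of size $O(1/\varepsilon)$, with the constant depending only on $c$. The proof uses Haussler-Welzl random sampling of size $r = \Theta(1/\varepsilon)$ together with the Chazelle-Friedman/Clarkson-Shor double-sampling trick: the probability that a first sample misses a fixed edge of $\mu$-measure $\ge \varepsilon$ is bounded by summing, over the cardinality $k$ of the missed edge restricted to a second sample of size $r$, the number of cardinality-$k$ edges in the $2r$-sized sub-hypergraph times the probability that all its elements land in the second copy. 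The shallow cell bound $O(r k^c)$ replaces the usual VC shatter function, yielding a geometric series in $k$ that stays below $1$ without the logarithmic overhead that a naive union bound produces.

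For the rounding, I would solve the LP $\min\sum_v w_v x_v$ subject to $\sum_{v \in e} x_v \ge 1$ for all $e\in E$ and $x_v \ge 0$ (using multiplicative weights with a separation oracle if $|E|$ is super-polynomial), obtaining $x^{*}$ of value $\Opt_{LP}\le \Opt$. After a dyadic decomposition of the weights into $O(\log \Opt)$ classes -- within each class the weights are within a factor of two of each other, so the problem essentially reduces to an unweighted one -- apply the $\varepsilon$-net lemma to the normalized $x^{*}$-distribution with $\varepsilon = 1/\Opt_{LP}$ inside each class, obtaining a set that, by the LP constraint and the net property, hits every edge in that class. Summing across classes would a priori lose a $\log\Opt$ factor, which is avoided by charging each class only to the heaviest class of comparable total mass, as in the Chan~\etal\ machinery; repeating the sampling polynomially many times and keeping the best output boosts the success probability arbitrarily close to~$1$.

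The main obstacle is the tight linear $O(1/\varepsilon)$ bound in the $\varepsilon$-net lemma: a vanilla Haussler-Welzl argument only delivers $O((1/\varepsilon)\log(1/\varepsilon))$, which would translate into an $O(\log\Opt)$-approximation rather than the desired $O(1)$. Closing this logarithmic gap is precisely where the shallow cell complexity hypothesis (as opposed to mere bounded VC-dimension) is essential, and it requires the multi-level ``quasi-uniform'' sampling framework of Varadarajan, refined by Chan~\etal~\cite{CGKS-12}; this, together with ensuring that the dyadic decomposition does not reintroduce a $\log\Opt$ factor, will be the technical crux of the proof.
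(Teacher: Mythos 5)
First, a point of reference: the paper does not prove Theorem~\ref{thm:chan_weighted_cover} at all --- it is quoted from Chan~\etal~\cite{CGKS-12} and used as a black box --- so there is no in-paper argument to compare yours against. Judged on its own terms, your sketch points at the right literature (LP relaxation plus quasi-uniform sampling, as in Varadarajan~\cite{Var-10} and \cite{CGKS-12}) and correctly locates the difficulty in avoiding a logarithmic loss. But as a proof it has a genuine gap, and the gap sits exactly where the content of the cited theorem lies.

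The gap is in the weighted rounding step. A weighted $\varepsilon$-net of size $O(1/\varepsilon)$ does not, by itself, yield an $O(1)$-approximation for the \emph{weighted} hitting set problem: the Br\"onnimann--Goodrich-style reduction~\cite{BG95} converts net size into an approximation factor only for the unweighted objective (or, in the LP version, for the fractional value). To control an arbitrary weight function one needs the stronger property that the net is produced by a sampling process in which each vertex $v$ enters the net with probability $O(x^{*}_{v})$; this ``quasi-uniformity'' is what allows one to take expectations against the weights and conclude that the output has expected weight $O(\Opt)$. Proving that such a process exists with only constant per-vertex overhead under the hypothesis $O(|V'|k^{c})$ is precisely the theorem of \cite{Var-10,CGKS-12}; your plan names this as the ``technical crux'' but defers it rather than supplying it. Moreover, the concrete mechanism you do propose for handling weights --- dyadic weight classes, a separate $\varepsilon$-net with $\varepsilon = 1/\Opt$ inside each class, and a charging of each class to the heaviest class of comparable mass --- fails as stated: an edge $e$ satisfies $\sum_{v \in e} x^{*}_{v} \ge 1$ only globally, and its LP mass may be spread thinly over all the weight classes so that no single class carries an $\varepsilon$-fraction of its own mass on $e$; then none of the per-class nets is obliged to hit $e$. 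Repairing this by shrinking $\varepsilon$ in each class reintroduces exactly the logarithmic factor you are trying to avoid, and the ``charge to the heaviest comparable class'' step is not the device used in \cite{CGKS-12}, which instead runs a single multi-round halving scheme over the whole ground set and bounds the cumulative inclusion probability of each vertex by a convergent product. Until that sampling lemma is actually established, the proof is incomplete at its central point.
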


It is now easy to verify that the statement in Theorem~\ref{thm:dominating_set} is obtained by combining 
Theorem~\ref{thm:chan_weighted_cover}, our observation regarding the equivalence between hitting sets and dominating
sets (in the sense discussed above), as well as our main result established in Theorem~\ref{theorem:main_c}.

%

\section*{Discussion}
\label{sec:discussion}
An earlier version of this work was presented at the 2015 Fall Workshop on Computational Geometry in Buffalo, NY (\url{https://www.cse.buffalo.edu/fwcg2015/assets/pdf/FWCG_2015_paper_15.pdf}).  Bal\'azs Keszegh has recently pointed out to us that a construction essentially identical to the one in Lemma~\ref{lemma:planar} has appeared independently in \cite{balazs:arXiv}.  He also noted that, just as in \cite{balazs:arXiv}, Theorem~\ref{theorem:main_c} and, by extension,  Theorem~\ref{thm:dominating_set}, also apply to the following, more general setting, unmodified: We once again consider the intersection hypergraph~$H(\P,\F)$, but allow $\P$~and~$\F$ to be two completely unrelated families of pseudo-disks.
$\P$ forms the ground set as above, and the hypergraph edges are formed by subsets of pseudo-disks from $\P$ intersected by a pseudo-disk from $\F$. We believe our analysis extends to this case as well, although we have not verified it in full detail.


\begin{thebibliography}{99}

%
%
%
%
%
%
%
%
%
%
%
%

%
%
%
%
%

%
%
%
%

%
%
%
%
%

%
%
%
%

%
%
%
%
%

%
%
%
%
%
%
%

%
%
%
%
%

%
%
%
%

%
%
%
%
%
%

\bibitem{AS-survey}
P.~K. Agarwal and M.~Sharir.
\newblock Arrangements and their applications.
\newblock In {\it Handbook of Computational Geometry}.
\newblock J.~Sack and J.~Urrutia (eds.), Elsevier, Amsterdam, 2000, 49--119.

%
%
%
%

%
%
%
%

\bibitem{AES-10}
B.~Aronov, E.~Ezra, and M.~Sharir.
\newblock Small-size $\eps$-nets for axis-parallel rectangles and boxes.
%
\emph{SIAM J. Comput.}, 39~(2010), 3248--3282.

%
%
%
%

%
%
%
%

%
%
%
%

\bibitem{BG95}
H.~Br\"onnimann and M. T. Goodrich.
Almost optimal set covers in finite VC-dimensions.
\emph{Discrete Comput. Geom.}, 14~(1995), 463--479.

\bibitem{BPR-13}
S.~Buzaglo, R.~Pinchasi, and G.~Rote. 
\newblock Topological hypergraphs. 
\newblock In \emph{Thirty Essays on Geometric Graph Theory}, J. Pach (ed.), 2013, 71--81.

%
%
%
%
%

\bibitem{CGKS-12}
T.~Chan, E.~Grant, J.~Koenemann, and M.~Sharpe.
\newblock Weighted capacitated, priority, and geometric set cover via improved quasi-uniform sampling.
\newblock \emph{Proc. Symp. Discrete Algorithms (SODA'12)}, pp. 1576--1585, 2012.

%
%
%
%

\bibitem{CC-04}
M. Chleb\'{i}k and J. Chleb\'{i}kov\'{a}. 
\newblock Approximation hardness of dominating set problems. 
\newblock \emph{Proc. 12th Annu. Europ. Sympos. Algorithms (ESA'04)}, pp. 192--203, 2004.

%
%
%
%

\bibitem{Chvatal-79}
V.~Chv\'atal.
\newblock A greedy heuristic for the set-covering problem.
\newblock {\em Mathematics of Operations Research}, 4(3):233--235, 1979.

\bibitem{CCJ-90}
B.~N.~Clark, C.~J.~Colbourn, and D.~S.~Johnson. 
\newblock Unit disk graphs. 
\newblock \emph{Discrete Mathematics}, 86(1--3):165--177, 1990.

%
%
%
%

\bibitem{CS-89}
K.~L. Clarkson and P.~W. Shor.
Applications of random sampling in computational geometry,~II.
\emph{Discrete Comput. Geom.}, 4~(1989), 387--421.

%
%
%
%

%
%
%
%

%
%
%
%
%
%
%
%
%

%
%
%
%

\bibitem{DS-14}
  I.~Dinur and D.~Steurer.
  \newblock Analytical approach to parallel repetition.
  \newblock In \emph{Proc. 46th Annu. ACM Symp. Theory Comput. (STOC'14)}, pp. 624--633, 2014.

%
%
%
%

%
%
%
%

\bibitem{EL-08}
T.~Erlebach and E.~J.~van Leeuwen. 
\newblock Domination in geometric intersection graphs. 
\newblock In \emph{Proc. 8th Latin American Symp. Theoretical Informatics} (\emph{LATIN'08}), pp.~747--758, 2008.

\bibitem{EM-09}
T.~Erlebach and M.~Mihal\'ak.
\newblock A $(4+\eps)$-approximation for the minimum-weight dominating set problem in unit disk graphs.
\newblock In \emph{Proc. 7th Workshop Approx. and Online Algorithms (WAOA'09), Revised Papers}. LNCS 5893, Springer-Verlag, pp. 135--146, 2010.

%
%
%
%

%
%
%
%
%

%
%
%
%
%

%
%
%
%

%
%
%
%

%
%
%
%

%
%
%
%

\bibitem{GJ-79}
M.~R.~Garey and D.~S.~Johnson.
\newblock {\em Computers and Intractability: A Guide to the Theory of NP-Completeness}.
\newblock W.~H.~Freeman, New York, NY, 1979.

\bibitem{GP-10}
M.~Gibson and I.~ A.~Pirwani.
\newblock Approximation algorithms for dominating set in disk graphs: Breaking the $\log{n}$ barrier.
\newblock {\em Proc. 18th Annu. Europ. Symp. Algorithms (ESA'10)}, 2010, 243--254.

\bibitem{GRRR-16}
  S.~Govindarajan, R.~Raman, S.~Ray, and A.~B.~Roy.
  \newblock Packing and covering with bon-piercing regions.
  \newblock \emph{Proc. 24th Annu. Europ. Symp. Algorithms (ESA'16)}, 2016, 47:1--47:17.

%
%
%
%
%

\bibitem{Har-Peled-11}
S.~Har-Peled.
\emph{Geometric Approximation Algorithms}.
Mathematical Surveys and Monographs, vol. 173, AMS, 2011.

\bibitem{HQ-15}
  S.~Har{-}Peled, and K.~Quanrud.
  \newblock  Approximation algorithms for polynomial-expansion and low-density graphs.
  \newblock In Proc. \emph{23rd Ann. Europ. Symp. Algorithms (ESA'15)}, pp.~717--728, 2015.
%

%
%
%
%

%
%
%
%
%

\bibitem{HMRRRS-98}
H.~B.~Hunt III, M.~V.~Marathe, V.~Radhakrishnan, S.~S.~Ravi, D.~J.~Rosenkrantz, and R.~E.~Stearns.
\newblock NC-approximation schemes for NP- and PSPACE-hard problems for geometric graphs. 
\newblock \emph{J. Algorithms}, 26(2):238--274, 1998.

%
%
%
%

\bibitem{LOVASZ-75}
L.~Lov{\'a}sz.
\newblock On the ratio of optimal integral and fractional covers.
\newblock {\em Discrete Mathematics}, 13:383--390, 1975.

%
%
%
%

\bibitem{Karp-72}
R.~M.~Karp.
\newblock Reducibility among combinatorial problems.
\newblock In \emph{Complexity of Computer Computations}, R.~E.~Miller, J.~W.~Thatcher (eds.), 85--103, Plenum Press, New York, 1972.

\bibitem{KLPS-86}
  K.~Kedem, R.~Livne, J.~Pach, and M.~Sharir.
  \newblock On the union of {Jordan} regions and collision-free translational
  motion amidst polygonal obstacles.
  \newblock {\em Discrete Comput. Geom.}, 1~(1986), 59--71.

%
%
%
%

%
%
%
%

\bibitem{balazs:arXiv}
B.~Keszegh.
Coloring intersection hypergraphs of pseudo-disks.
In \texttt{arXiv:1711.05473 [math.CO]}.

%
%
%
%
%

%
%
%
%

\bibitem{Mat-95}
J. Matou\v{s}ek.
Approximations and optimal geometric divide-and-conquer.
\emph{J. Comput Sys. Sci.}, 50:203--208, 1995.

%
%
%
%

%
%
%
%

\bibitem{NH-06}
T.~Nieberg, and~J.~L.~Hurink.
\newblock A PTAS for the minimum dominating set problem in unit disk graphs. 
%
\emph{Proc. Third Workshop Approx. and Online Algorithms (WAOA'05), Revised Papers}, LNCS 3879, Springer-Verlag, pp. 296--306, 2006.
%

%
%
%
%

%
%
%
%

\bibitem{PSS07}
M.J.~Pelsmajer, M.~Schaefer, D.~\v{S}tefankovi\v{c}.
\newblock Removing even crossings.
\newblock \emph{Journal of Combinatorial Theory, Ser.~B},
97(4):489--500, 2007.
%
%
%

%
%
%
%

\bibitem{Tutte-Com-70}
  W. T.~Tutte.
  \newblock Toward a theory of crossing numbers.
  \newblock {\em J. Combinat. Theory}, 8:45--53, 1970.

%
%
%
%
%
%

%
%
%
%

\bibitem{Var-10}
K.~Varadarajan.
Weighted geometric set cover via quasi-uniform sampling.
\emph{Proc. 42th Annu. ACM Symp. Theory Comput. (STOC'10)}, pp. 641--648, 2010.

%
%
%
%

%
%
%
%

\end{thebibliography}
\end{document}